\newtheorem*{proposition*}{Proposition}
\newtheorem*{corollary*}{Corollary}
\newtheorem{proposition}{Proposition}
\newtheorem{corollary}{Corollary}
\newtheorem{lemma}{Lemma}
\newtheorem*{remark*}{Remark}
\newtheorem{definition}{Definition}
\newtheorem{theorem}{Theorem}
\newtheorem{example}{Example}
\newtheorem{claim}{Claim}
\newtheorem*{claim*}{Claim}
\renewcommand{\email}[2][]{%
  \ifx\emails\@empty\relax\else{\g@addto@macro\emails{,\space}}\fi%
  \@ifnotempty{#1}{\g@addto@macro\emails{\textrm{(#1)}\space}}%
  \g@addto@macro\emails{#2}%
}
\begin{document}

\thispagestyle{empty}

\title[]{When do Reforms meet fairness concerns in school admissions?}

\author{Somouaoga Bonkoungou$^{\ast}$}
\address{$^{\ast}$Faculty of Business and Economics, University of
Lausanne, Internef, CH-1015 Lausanne, Switzerland, \textup{bkgsom@gmail.com} \newline $^{\dagger}$Department of Economics and Game Theory Lab, Higher School of Economics, 16, Soyuza Pechatnikov
st., 190121, St. Petersburg, Russia, \textup{nesterovu@gmail.com}}

\author{Alexander Nesterov$^{\dagger}$}

\thanks{We are grateful to Bettina Klaus, Battal Do\u{g}an, Lars Ehlers, and Rustamdjan Hakimov for their suggestions. We thank Camille Terrier, Inacio Bo, Madhav Raghavan, and other participants of the online seminar of the Lausanne Market Design group for their comments, as well as the participants of the online Conference on Mechanism and Institution Design, and the 2022 EEA-ESEM meeting in Milano.\\ 
Acknowledgments: Somouaoga’s work in this project has received funding from the European Union’s Horizon 2020 research and innovation programme under the Marie Skłodowska-Curie grant agreement No 890648. The results of this publication reflect only the authors’ view and the commission is not responsible for any use that may be made to the information it contains. Support from the Basic Research Program and the “Priority 2030” Program of the National Research University Higher School of Economics is gratefully acknowledged. 
}

\date{April 15, 2024}

\begin{abstract}

We study a series of reforms in school admissions mechanisms motivated, among other reasons, by fairness concerns and vulnerability to manipulation. Before the reforms as well as after, the mechanisms were vulnerable to preference manipulation and induced \textit{blocking students}: students who miss desired schools despite having higher priority or seats thereat left empty. We demonstrate that some of these reforms improved fairness by adopting mechanisms with fewer blocking students compared to the preexisting ones, while several others did not. We identify preexisting mechanisms where fairness consideration was more of an issue than vulnerability to manipulation and those where it is the reverse.

$ $
\\
\noindent \emph{Keywords}: market design, school choice, college admissions, fairness, stability, manipulability\\
%
%
\noindent \textbf{JEL Classification}: C78, D47, D78, D82
\end{abstract}

\maketitle

\thispagestyle{empty}

\section{Introduction}

In the last two decades, there has been a wave of reforms of school admissions mechanisms around the world \citep{pathak2013}. The surprising fact is that after such reforms most matching mechanisms present the same poor properties that could have arguably justified the policy changes. For example, despite some evidence that vulnerability to manipulation and fairness concerns mostly drove the changes, most newly adopted matching mechanisms still suffer from these two deficiencies. 


Fairness is at the forefront of the concerns that led to the policy changes. The most vivid example is, perhaps, the 2007 major reform in England, which covers 146 local school admissions systems. According to the then-Secretary of State, Alan Johnson, the reform aimed to \textit{``ensure that admission authorities -- whether local authorities or schools -- operate in a fair way''}  \citep{code2007}. Among other things, the reform prohibited the practice of giving \textit{``priority to children according to the order of other schools named as preference by their parents,''} known as the \textit{first-preference-first principle}. This principle requires that a student who ranks a school higher in her list receives a higher admission priority at this school compared to students who rank it lower. Before the reform, as many as one-third of schools in England used this principle. 

In 2009, Chicago education authorities reformed their Selective High School admission system. They replaced the so-called Boston mechanism that used the first-preference-first principle for each school, arguing that, due to this principle \textit{``high-scoring kids were being rejected simply because of the order in which they listed their [schools] preferences''} \citep{pathak2013}. 
The same Boston mechanism has also been used for college admissions in several provinces in China, and it raised similar complaints. For example, one parent said: \textit{``My child has been among the best students in his school and school district. He achieved a score of 632 in the college entrance exam last year. Unfortunately, he was not accepted by his first choice. After his first choice rejected him, his second and third choices were already full. My child had no choice but to repeat his senior year''} \citep{chenandkesten2017a,nie2007game}. In 2003, more than 3 million students, representing half of the annual intake, were matched to significantly worse colleges than what their grades allowed \citep{wu2020matching}.


These examples illustrate unfairness concerns with the old mechanisms: they can induce a matching with a so-called \textit{blocking student}, that is, a student who missed a school while at least one seat at that school has been assigned to a student with a lower grade or priority or even left empty. The blocking student is entitled to this seat, yet she has not been assigned to it. It is important to note that we define the concept of fairness concerning true preferences and not reported preferences. A matching with no blocking student is \textit{stable} and is viewed as a fair outcome as it eliminates \enquote{justified envy}, a situation in which a student prefers a school that is assigned to another student with lower admission priority \citep{abdul2003}.\footnote{In general, the relation between stability and fairness is more nuanced, see \cite{romm2020stability}.} \cite{galeandshapley} show that for any instance there is a student-optimal stable matching, a matching that every student finds at least as good as any other stable matching. This stable matching can be reached by the student-proposing deferred acceptance algorithm by \cite{galeandshapley}. We refer to it as Gale-Shapley mechanism.

Apart from the first preference first principle, many mechanisms induce blocking students precisely because they have ranking constraints. In such a mechanism each student is allowed to rank-list only a limited number of schools, typically between 3 and 5 \citep{pathak2013}. Even in New York City, where the ranking constraint is 12 and there are more than 1700 schools, around 25\% of students report a complete list of 12 schools, while only 5\% report 9, 10, or 11 schools, suggesting that around 20\% of students in New York City could not list all acceptable schools \citep{abdul2009}. 
Students who missed all their listed schools but could have been admitted to unlisted schools will be dissatisfied with the admissions system and deem it unfair. We consider all blocking students, whether it concerns listed schools for which admissions authorities can verify priority violations or unlisted schools that lead to dissatisfaction (see \citeauthor{calsamiglia2010constrained}, \citeyear{calsamiglia2010constrained}). 

Our first goal is to investigate whether the reforms led to \textit{more fair} matching mechanisms. Our second goal is to investigate the relative importance of vulnerability to manipulation and fairness concerns in the preexisting mechanisms. \cite{chenandkesten2017a} propose to compare mechanisms by set inclusion of problems where they produce stable outcomes. However, this notion cannot further distinguish mechanisms in each instance where they are not stable. A finer and complementary notion is useful, in particular, when the compared mechanisms are not stable or a large fraction of instances or real-life instances lie in this domain. We indeed illustrate that for many of our compared mechanisms, real-life instances are likely to be in a domain where the compared mechanisms are not stable. 
 
To address this problem, we count and compare the number of blocking students across mechanisms. In an instance where two mechanisms are not stable, they can still be contrasted using the number of blocking students.\footnote{To our knowledge, this criterion has been first used by \cite{roth1997turnaround}. \cite{niederle2009market}, \cite{eriksson2008instability} and \cite{dougan2021minimally} used the criterion of counting the number of blocking pairs, which does not allow comparisons in our setting (see Remark in section \ref{sec-comparison}). \cite{dougan2021minimally} and \cite{dogan2022} introduced criteria for counting the number of blocking pairs and blocking students.} Our investigation led to a result that supports an important kind of reform. Broadly, these reforms involve extending ranking constraints in the Gale-Shapley mechanism. The Gale-Shapley mechanism with a relaxed constraint has weakly fewer blocking students than the restricted counterpart and there are instances where it has fewer blocking students. This took place in Chicago (2010), in Ghana (2007, 2008), in Newcastle (2010), and in Surrey (2010) \citep{pathak2013}. For the remaining reforms, it is not possible to conclude by comparing the number of blocking students. We show that after those reforms the number of blocking students may increase.


We now answer the following question. Was fairness more of a concern compared to vulnerability to manipulation? We focus on blocking students and manipulating students, students who could gain by misreporting their preferences while others are truthful. We show that for any instance the constrained Gale-Shapley mechanism has weakly more blocking students than manipulating students. More precisely, any manipulating student is a blocking student of the mechanism. In contrast, for any instance, the constrained Boston mechanism has weakly more manipulating students than blocking students. More precisely, any blocking student is a manipulating student of the mechanism. For the constrained serial dictatorship mechanism any blocking student is a manipulating student and vice versa. 
A more subtle relationship between stability and manipulability can be seen in the reform in England. After this reform, the mechanisms in most school districts did not become less manipulable \citep{bonkoungou2020} and they did not become fairer by stability either (Example \ref{ex-fpf}). However, the reform was successful according to at least one dimension by the following criterion: if the reform disrupted fairness --- by producing an unstable matching while it was stable before the reform --- the new mechanism is not vulnerable to manipulation. 

\textbf{Related literature.} 
Apart from papers studying the reforms mentioned earlier \citep{pathak2013,chenandkesten2017a,bonkoungou2020,bonkoungou2021incentives,imamura2022measuring} and papers that used the method of counting blocking agents and blocking pairs \citep{roth1997turnaround,niederle2009market,eriksson2008instability} there is recent literature interested in various ways of comparing matching mechanisms by fairness. 

Among strategy-proof and Pareto efficient mechanisms, the Gale's Top Trading Cycles mechanism \citep{shapley1974} is among the most fair by stability when each school has one seat \citep{abdulkadiroglu2019efficiency}. This result also holds for other fairness comparisons, such as the set of blocking students \citep{dogan2022} and the set of blocking triplets $(i,j,s)$ -- student $i$ blocking the matching of school $s$ with student $j$ \citep{kwon2019justified}. The result holds for any stability comparison that satisfies a few basic properties \citep{dogan2022}.

Among Pareto efficient mechanisms, the Efficiency Adjusted Deferred Acceptance mechanism (EADA) due to \cite{kesten2010} is among the most fair in terms of blocking pairs and blocking triplets \citep{dougan2021minimally,tang2021weak,kwon2019justified}. Independent from the present work, \cite{dougan2021minimally} also introduced the fairness comparison by counting to show that among efficient mechanisms, EADA is not the most fair by counting unless the priority profile satisfies a few acyclicity conditions. 

The first papers that studied constrained mechanisms are \cite{romero1998implementation} and \cite{haeringer2009constrained}. They study the stability of Nash equilibrium outcomes of the game induced by these mechanisms. The most important insight is that the Nash equilibrium outcomes of the constrained Boston mechanism are all stable, while the Nash equilibrium outcomes of the constrained Gale-Shapley may not all be stable.\footnote{\cite{erginandsonmez} showed that the Nash equilibrium outcomes of the unconstrained Boston mechanism are stable.} Besides, the Nash equilibrium outcomes of the constrained Gale-Shapley are a subset of the Nash equilibrium outcomes of any constrained Gale-Shapley with a relaxed constraint. Therefore, when the Nash equilibrium outcomes of the constrained Gale-Shapley with a relaxed constraint are all stable, the Nash equilibrium outcomes of the constrained Gale-Shapley with a restricted constraint are also stable.

The rest of the paper is organized as follows. In Section 2, we introduce the model. In Section 3, we present fairness comparisons. In Section 4, we study the relationship between stability and manipulability. We present most of the proofs in the Appendix.

\vspace{0.3cm}
\section{Model}

We consider the school choice problem \citep{balinski1999,abdul2003}. It consists of the following elements: 
\begin{itemize}
    \item a finite set $I$ of students,
    \item a finite set $S$ of schools,
    \item a profile $P=(P_{i})_{\in I}$ of preference relations for each student,
    \item a profile $\succ=(\succ_s)_{s\in S}$ of priority orders for each school, and
    \item a vector $q=(q_{s})_{s\in S}$ of capacities for each school
\end{itemize}
where $P$ and $\succ$ are defined as follows.
For each student $ i$, $ P_i$ is a strict preference relation $ P_{i}$ over $ S\cup \{\emptyset\}$, where $ \emptyset$ represents the outside option of being unmatched. For each school $ s$, $ \succ_s$ is a strict priority order over $ I$. For each student $ i$, let $ R_{i}$ denote the ``at least as good as'' relation associated with $ P_{i}$.\footnote{That is, for each $s,s'\in S\cup\{\emptyset\} $, $ s\mathrel{R_{i}}s'$ if and only $ s\mathrel{P_{i}}s'$ or $ s=s'$.} School $ s$ is \textbf{acceptable} to student $ i$ if $ s \mathrel{P_{i}} \emptyset$; and it is \textbf{unacceptable} to student $ i$ if $ \emptyset\mathrel{P_{i}}s$. We extend the priority order $ \succ_{s}$ of each school $ s$ over $ I$ to the set $ 2^{I}$ of subsets of students and assume that it is responsive to the priority order over $ I$ \citep{roth1985col}. By definition, the priority order $ \succ_{s}$ over $ 2^{I}$ is responsive if for any students $ i,j\in I$ and any subset $ N\subset I\setminus\{i,j\}$ such that $ \lvert N\lvert < q_{s}$, (i) $ N\cup\{i\} \mathrel{\succ_{s}} N$, and (ii) $ N\cup\{i\} \mathrel{\succ_{s}} N\cup\{j\}$ if and only if $ i\mathrel{\succ_{s}}j$. Let $ k\in \{1,\hdots, \lvert S\lvert\}$, and $P_i$ a preference relation where student $ i$ has $x$ acceptable schools. The truncation of $P_i$ after the $ k$'th acceptable school (if any) of the preference relation $ P_{i}$ is a preference relation with $ \min(x,k)$ acceptable schools such that all schools are ordered as in $ P_{i}$. Let $ P^{k}_{i}$ denote the truncation of $ P_{i}$ after the $k$'th acceptable school. Let $ P^{k}=(P^{k}_{i})_{i\in I}$. Given a proper subset $I'\subsetneq I$ of students, we will often write a preference profile as $ P=(P_{I'}, P_{-I'})$ to emphasize the components for students in $ I'$. The tuple $ (I, S, P,\succ,q)$ is a school choice problem or simply a \textbf{problem}. We assume that there are more students than schools, that is, $ \lvert I\lvert > \lvert S\lvert $. The set of students and the set of schools are fixed throughout the paper, and we denote the (school choice) problem by the triple $ (P,\succ,q)$.

A \textbf{matching} $ \mu$ is a function $ \mu:I \rightarrow S\cup \{\emptyset\}$ such that for each school $ s$, $\lvert \mu^{-1}(s)\lvert \leq q_{s}$. We say that student $ i$ is matched under $ \mu$ if $ \mu(i)\neq\emptyset$ and unmatched under $ \mu$ if $ \mu(i)=\emptyset$. Let $ (P,\succ,q)$ be a problem. A matching $ \mu$ is \textbf{individually rational} under $ P$ if for each student $ i$, $ \mu(i) \mathrel{R_{i}} \emptyset$. A pair $ (i,s)$ of a student and a school \textbf{blocks} the matching $ \mu$ under $ (P,\succ,q)$ if $ s\mathrel{P_{i}}\mu(i)$ and either there is a student $ j$ such that $ \mu(j)=s$ and $i \mathrel{\succ_{s}}j$ or $ \lvert\mu^{-1}(s)\lvert <q_{s}$. Student $ i$ is a \textbf{blocking student} for the matching $ \mu$ under $ (P,\succ,q)$ if there is a school $ s$ such that the pair $ (i,s)$ blocks $ \mu$ under $ (P,\succ,q)$. A matching $ \mu$ is \textbf{stable} at $ (P,\succ,q)$ if it is individually rational under $ P$ and has no blocking student. We often drop the problem and refer to a stable matching. A \textbf{mechanism} $ \varphi$ is a function that maps each problem to a matching. For each problem $ (P,\succ,q)$, let $ \varphi_{i}(P,\succ,q)$ denote the component for student $ i$. A mechanism $ \varphi$ is individually rational if for each problem $ (P,\succ,q)$ the matching $ \varphi(P,\succ,q)$ is individually rational under $ P$. A mechanism $ \varphi$ is stable if for each problem $ (P,\succ,q)$ the matching $ \varphi(P,\succ,q)$ is stable at $ (P,\succ,q)$. We often drop the problem and say that a mechanism is stable (at the implicitly assumed problem).

\subsection{Mechanisms} We are interested in mechanisms that were used either before or after the reforms. We first describe unconstrained versions.
\subsubsection*{Gale-Shapley}
\cite{galeandshapley} showed that for each problem, there exists a stable matching. In addition, there is a student-optimal stable matching, which is a matching that each student finds at least as good as any other stable matching. For each problem $ (P,\succ,q)$, this matching can be found via the \cite{galeandshapley} student-proposing deferred acceptance algorithm.

\begin{itemize}
\item \underline{\textit{Step 1:}} Each student applies to her most preferred acceptable school (if any). If a student did not rank any school acceptable, then she remains unmatched. Each school $ s$ considers its applicants at the first step denoted as  $ I^{1}_{s}$ and tentatively accepts $\min(q_{s},\lvert I^{1}_{s}\lvert) $ of the $ \succ_{s}$-highest priority applicants and rejects the remaining ones. Let $ A^{1}_{s}$ denote the set of students whom school $ s$ has tentatively accepted at this step.
\item \underline{\textit{Step t>1:}} Each student, who is rejected at step $ t-1$, applies to her most preferred acceptable school among those which have not yet rejected her (if any). If a student does not have any remaining acceptable school, then she remains unmatched. Each school $ s$ considers the set $A^{t-1}_{s}\cup I^{t}_{s}$, where $ I^{t}_{s}$ are its new applicants at this step, and tentatively accepts $ \min(q_{s},\lvert A^{t-1}_{s}\cup I^{t}_{s}\lvert)$ of the $ \succ_{s}$-highest priority applicants and rejects the remaining ones. Let $ A^{t}_{s}$ denote the set of students whose school $ s$ has tentatively accepted at this step.
\end{itemize}
The algorithm stops when no student is rejected. The tentative acceptances become final at this step. Let $ GS$ denote this mechanism. Given $ k\in \{1,...,\lvert S\lvert\}$, the constrained version $ GS^{k}$ of the Gale-Shapley mechanism $ GS$ is the mechanism that assigns to each problem $ (P,\succ,q)$ the matching $ GS(P^{k},\succ,q)$. That is, $ GS^{k}(P,\succ,q)=GS(P^{k},\succ,q)$. 

\subsubsection*{Serial Dictatorship} When schools have the same priority order, we call the Gale-Shapley mechanism the serial dictatorship mechanism.\footnote{This is a slight abuse of our definition since the domain of a mechanism is the set of all problems --- including problems where schools have different priorities.} Let $ SD$ denote this mechanism. The outcome of this mechanism can be computed via the following simplified process. Students move in sequence following the common priority order. The first student picks her most preferred acceptable school. The next student picks her most preferred acceptable school among the remaining ones, and so on. Given $ k\in \{1,...,\lvert S\lvert\}$, the constrained version $ SD^{k}$ of the Serial Dictatorship mechanism $ SD$ is the mechanism that assigns to each problem $ (P,\succ,q)$ the matching $ SD(P^{k},\succ,q)$. That is, $ SD^{k}(P,\succ,q)=SD(P^{k},\succ,q)$.

\subsubsection*{First-Preference-First} The schools are exogenously divided into two disjoint subsets $ S^{fpf}$ and $ S^{ep}$ such that $ S^{fpf}\cup S^{ep}=S$. The set $ S^{eq}$ is a set of \textbf{equal-preference schools} and $ S^{fpf}$ is a set of \textbf{first-preference-first} schools. 
The First-Preference-First mechanism (FPF) assigns to each problem $ (P,\succ,q)$, the matching $ GS(P,\hat{\succ},q)$ where $\hat{\succ}$ is obtained as follows. The priority order of each equal-preference school is maintained intact while the priority order of each first-preference-first school is adjusted according to the rank that students have assigned to it. Formally, the priority profile $ \hat{\succ}$ is obtained as follows:
\par 1.~for each equal-preference school $ s\in S^{ep}$, $ \hat{\succ}_{s}=\,\succ_{s}$ and
\par 2.~for each first-preference-first school $ s\in S^{fpf}$, $ \hat{\succ}_{s}$ is defined as follows. Let $ I^{1}(s)$ be the set of students who have ranked school $ s$ first under $ P$, $ I^{2}(s)$ the set of students who have ranked school $ s$ second under $ P$, and so on. Note that we count the ranking of $ \emptyset$ as well.
\begin{itemize}
\item For each $ \ell,k\in \{1,\hdots,\lvert S\lvert+1\}$ such that $ \ell>k$ and each students $ i,j$ such that $i\in I^{k}(s)$ and $ j\in I^{\ell}(s)$, $ i\mathrel{\hat{\succ}_{s}}j$. 

\item For each $ k\in \{1,\hdots,\lvert S\lvert+1\}$ and each $ i,j\in I^{k}(s)$, $ i\mathrel{\hat{\succ}_{s}}j$ if and only if $ i\mathrel{\succ_{s}}j$. 
\end{itemize}
Let $ FPF$ denote this mechanism. Given $ k\in \{1,...,\lvert S\lvert\}$, the constrained version $ FPF^{k}$ of the First-Preference-First mechanism $ FPF$ is the mechanism that assigns to each problem $ (P,\succ,q)$ the matching $ FPF(P^{k},\succ,q)$. That is, $ FPF^{k}(P,\succ,q)=FPF(P^{k},\succ,q)$.

\subsubsection*{Boston} Until 2005, the Boston public school system was using an immediate acceptance mechanism called the Boston mechanism \citep{abdul2003}. This mechanism assigns to each problem $ (P,\succ,q)$, the matching as described in the following algorithm.

\begin{itemize}
\item \underline{\textit{Step 1:}} Each student applies to her most preferred acceptable school (if any). Each school $ s$, considers its applicants at the first step denoted as $ I^{1}_{s}$ and immediately accepts $\min(q_{s},\lvert I^{1}_{s}\lvert) $ of the $ \succ_{s}$-highest priority applicants and rejects the remaining ones. For each school $ s$, let $ q^{1}_{s}=q_{s}-\min(q_{s},\lvert I^{1}_{s}\lvert)$ denote its remaining capacity after this step. 

\item \underline{\textit{Step t>1:}}  Each student who is rejected at step $ t-1$, applies to her most-preferred acceptable school among those who have not yet rejected her (if any). Each school $ s$ considers its new applicants $ I^{t}_{s}$ at this step and immediately accepts $ \min(q^{t-1}_{s},\lvert I^{t}_{s}\lvert)$ of the $ \succ_{s}$-highest priority applicants and rejects the remaining ones. For each school $ s$, let $ q^{t}_{s}=q^{t-1}-\min(q^{t-1}_{s},\lvert I^{t}_{s}\lvert)$ denote its remaining capacity after this step.
\end{itemize}

The algorithm stops when every student is either accepted at some step or has applied to all of her acceptable schools. Let $ \beta$ denote this mechanism. Given $ k\in \{1,...,\lvert S\lvert\}$, the constrained version $ \beta^{k}$ of the Boston mechanism $ \beta$ is the mechanism that assigns to each problem $ (P,\succ,q)$ the matching $ \beta(P^{k},\succ,q)$. That is, $ \beta^{k}(P,\succ,q)=\beta(P^{k},\succ,q)$.

\begin{remark*}
In the (algorithm of the) Boston mechanism, students applying to the same school at each step have assigned the same rank to it. Therefore, students applying to a school at a given step of the algorithm rank this school higher than those applying to it at any step after. In particular, no student could be rejected by a school while another student, who has assigned a lower rank to it, is accepted by this school. Thus, the Boston mechanism is a First-Preference-First mechanism where every school is a first-preference-first school. This result follows from the Proposition 2 of \cite{pathak2008}.
\end{remark*}


\subsubsection*{Chinese parallel} \cite{chenandkesten2017a} describe a parametric mechanism that many Chinese provinces have been using. The parameter $ e\geq 1$ is a natural number. For each problem $(P,\succ,q) $, the outcome is a sequential application of constrained $GS$. In the first round, the matching is final for students who are matched under $ GS^{e}(P,\succ,q)$, while unmatched students proceed to the next round. In the next round, each school reduces its capacity by the number of students assigned to it in the last round, each matched student replaces her preferences with a preference relation where she finds no school acceptable and the unmatched students (in the previous round) are matched according to $GS^{2e}$ for the reduced capacities and the new preference profile. The process continues until either no school has a remaining seat or no unmatched student finds a school with a remaining seat acceptable. Let $ Ch^{(e)}$ denote this mechanism.\footnote{This definition of the Chinese parallel mechanisms is given only for the symmetric version where each round has the same length $e$. See \cite{chenandkesten2017a} for details.}


\section{Comparison of Mechanisms}\label{sec-comparison}

In this section we compare mechanisms according to two criteria: fairness by stability and fairness by counting.

\subsection{Fairness by stability}
Our starting point is a comparison according to the set inclusion of problems where mechanisms are stable.

\begin{definition} \citep{chenandkesten2017a}. Mechanism $\varphi'$ is \textbf{more fair by stability} than $ \varphi$ if (i) at each problem where $\varphi$ is stable, $\varphi'$ is also stable and (ii) there exists a problem where $\varphi'$ is stable but $\varphi$ is not.
\end{definition}

This criterion is less demanding in the sense that it does not take into account problems where mechanisms produce unstable outcomes. However, it does not explain many changes that followed the 2007 reform in the UK as the constrained First-Preference-First mechanism is not comparable to the constrained Gale-Shapley mechanism according to this criterion. We demonstrate this in the following example.\footnote{Example where $FPF^k$ has blocking students while $GS^k$ does not is immediate and omitted.} 

\begin{example} \label{ex-fpf}

Let $I=\{i_{1},\hdots,i_{7}\}$ and $S=\{s_{1},\hdots,s_{5}\}$. Let school $s_3$ be the only first-preference-first school. Let $(P,\succ,q)$ be a problem where each school has one seat and the remaining components are specified as follows. (The sign $ \vdots$ indicates that the remaining part is arbitrary.)

\begin{center}
\begin{tabular}{@{}lllllll|lllll@{}}
$P_{i_1}$& $P_{i_2}$   & $P_{i_3}$   & $P_{i_4}$   & $P_{i_5}$   & $P_{i_6}$   & $P_{i_7}$   & $\succ_{s_1}$& $\succ_{s_2}$& $\succ_{s_3}$& $\succ_{s_4}$& $\succ_{s_5}$\\ \midrule
$s_1$& $s_1$   & $s_4$   & $s_1$   & $s_2$   & $s_1$   & $s_5$   & $i_4$& $i_5$& $i_3$& $i_1$& $i_7$\\
$s_2$& $s_3$   & $s_3$   & $s_2$   & $s_1$   & $s_2$   & $s_1$   & $\vdots$ & $\vdots$ & $i_1$& $i_6$& $\vdots$ \\
$s_3$& $\emptyset$ & $\emptyset$ & $s_3$   & $s_3$   & $s_5$   & $s_2$   &  &  & $i_2$& $i_3$&  \\
$s_4$& & & $\emptyset$ & $\emptyset$ & $s_3$   & $\emptyset$ &  &  & $\vdots$ & $\vdots$ &  \\
$\emptyset$ & & & & & $s_4$   & &  &  &  &  &  \\
 & & & & & $\emptyset$ & &  &  &  &  & 
\end{tabular}
\end{center}

The outcomes of the constrained First-Preference-First $FPF^4$ and the constrained Gale-Shapley $GS^4$ at $(P,\succ,q)$ are as follows: 

\begin{equation*}
FPF^{4}(P,\succ,q)=\begin{pmatrix}
i_1 & i_2 & i_3 & i_4 & i_5 & i_6 & i_7 \\
s_4 & \emptyset & s_3 & s_1 & s_2 & \emptyset & s_5
\end{pmatrix},
\end{equation*}

\begin{equation*}
GS^{4}(P,\succ,q)=\begin{pmatrix}
i_1 & i_2 & i_3 & i_4 & i_5 & i_6 & i_7 \\
s_3 & \emptyset & s_4 & s_1 & s_2 & \emptyset & s_5
\end{pmatrix}.
\end{equation*}
The matching $FPF^{4}(P,\succ,q)$ is stable.\footnote{This matching is both the student-optimal and the school-optimal stable matching.} However, the matching $GS^{4}(P,\succ,q)$ is not stable. Indeed, the pair $(i_6,s_{4})$ blocks this matching because student $ i_{6}$ is unmatched and finds school $s_4$ acceptable, but student $ i_{3}$ is matched to $s_{4}$ while $ i_{6}\mathrel{\succ_{s_{4}}}i_{3}$. 
The intuition is that the constraint in $ GS$ shortened the chains of the rejections needed to reach a stable matching in the Gale-Shapley algorithm. For example, student $ i_{3}$ is temporarily matched to school $ s_{4}$ at some step of the algorithm. At the student-optimal stable matching for $ (P,\succ,q)$, school $ s_{4}$ is assigned to student $ i_{1}$. However, we need an application of student $ i_{1}$ at that school to displace student $ i_{3}$ from $ s_{4}$. This does not occur under $ GS^{4}$ because no student initiates the rejection chain. However, under $ FPF^{4}$, the application of student $ i_{2}$ at school $ s_{3}$ causes the rejection of student $ i_{1}$ at $ s_{3}$ (student $ i_{2}$ has ranked it higher than $ i_{1}$ and school $ s_{3}$ is a first-preference-first school). This is the rejection needed to reach the student-optimal stable matching.

\end{example}
This example illustrates how the constrained $GS$ mechanism has shortened the chains needed to reach a stable matching. It is well known that this type of chain leads to unambiguous welfare losses. Each student in the chain is worse off, and all other students are unaffected \citep{kesten2010}.\footnote{These chains are initiated by the so-called interrupters. These are students who initiate chains of rejections that return to them \citep{kesten2010}.} However, under the Boston mechanism, (where all schools are first-preference-first schools) there is no such chain, and thus the constrained Boston mechanism can be compared to the constrained Gale-Shapley mechanism. We also show the comparison result for the Gale-Shapley mechanisms with different constraints.

\begin{proposition}

 \label{thm1} Let $ \ell$ and $k$ be positive integers such that $k>\ell> 1$ and suppose that there are at least two schools:

(i) The constrained Gale-Shapley mechanism $GS^{k}$ is more fair by stability than the constrained Boston mechanism $ \beta^{k}$.
 
(ii) The constrained Gale-Shapley mechanism $ GS^{k}$ is more fair by stability than constrained Gale-Shapley mechanism $ GS^{\ell}$.
 
\end{proposition}

 \cite{chenandkesten2017a} have established that any (unconstrained) Chinese mechanism $Ch^e$ is more stable than any Chinese mechanism $Ch^{e'}$ where $e'=k e$ for $k\in \mathbb{N}\cup \{\infty\}$. Their result and ours are similar but not corollary of each other. Indeed, $Ch^1$ is the Boston mechanism and $Ch^{\infty}$ is the Gale-Shapley mechanism such that for a problem $(P,\succ,q)$, we can write $\beta^k(P,\succ,q)=Ch^{(1)}(P^k,\succ,q)$ and $GS^k(P,\succ,q)=Ch^{\infty}(P^k,\succ,q)$. Our results concern constrained Chinese mechanisms where both the parameter $e$ and the constraint $k$ could be a source of blocking while in \cite{chenandkesten2017a}, the parameter $e$ is the only source of blocking. 

The difficulty with fairness by stability is that, for problems we encounter in real life, all mechanisms described above are likely to induce unstable outcomes, and the comparisons therefore are driven by some less relevant problems. Consider the high school admissions problem in Chicago where schools have a common priority (constructed from student's composite scores) and where students form block preferences as illustrated in the following example.    

\begin{example}[Tier preferences]
Consider a problem with $n$ students and $m$ schools such that for each $s,s'\in S$, $\succ_s\mathrel{=}\succ_{s'}$. 
We assume that students have tier preferences. The set $ S$ of schools is partitioned into two sets $S_{1},S_{2}$. Each student $i$ prefers each school in $ S_{1}$ to each school in $S_{2}$.\footnote{\cite{coles2013} observed that the academic job market has this structure and referred to it as block-correlated preferences.} We assume that each student finds each school acceptable and $n>\sum_{s}q_{s}$.

Whenever $|S_{1}|\geq k$, no student ranks a school in $ S_{2}$ among the top $ k$ acceptable schools. Any individually rational and $k$-constrained ranking mechanism has a blocking student. Indeed, if every student reports her preferences truthfully, then some students are unmatched while seats at schools in $S_{2}$ are unassigned. 

But we can still distinguish different constrained Gale-Shapley mechanisms by the number of blocking students within this preference domain. For example, suppose that students have the same ranking over all schools in $S_1$ and we compare $GS^{k-1}$ and $GS^{k}$. Every unmatched student is a blocking student (and vice versa), and $GS^{k}$ matches strictly more students than $GS^{k-1}$ and thus $GS^{k}$ has fewer blocking students than $GS^{k-1}$ does.\end{example}

While this example is simplified to illustrate the fact that fairness by stability might not be able to distinguish the mechanisms that we study in real-life instances, all that is necessary for the result is that many students rank a subset of schools such that they exhaust the listing constraint, and such that these schools do not have enough seats to accommodate all these students. 

\subsection{Fairness by counting}

In this section, we introduce a criterion for comparing the number of blocking students. With this criterion, mechanisms can be compared at problems where both induce unstable outcomes.

\begin{definition} A mechanism $ \varphi'$ is \textbf{more fair by counting} (the number of blocking students) than a mechanism $ \varphi$ if 
(i) for each problem, there are at least as many blocking students of the outcome of $\varphi$ as there are of the outcome of $\varphi'$, and
(ii) there is a problem where there are more blocking students of the outcome of $ \varphi$ than the outcome of $ \varphi'$.

\end{definition}

Fairness by counting is stronger than fairness by stability considered earlier. If a mechanism $\varphi'$ is more fair by counting than $\varphi$, then for each problem where $\varphi$ induces a stable matching, i.e., there is no blocking student, $\varphi'$ also necessarily induces a stable matching. Our main result with this concept is a strengthening of the comparison between different constraints of the Gale-Shapley mechanism. We illustrate the intuition using the example below.

\begin{example} \label{ex-GS2to1}

Let $I=\{i_{1},\hdots,i_{5}\}$ and $S=\{s_{1},\hdots,s_{4}\}$. Let $ (P,\succ,q)$ be a problem where each school has one seat, and the remaining components are specified as follows.

\begin{center}
\begin{table}[ht]
\centering
\begin{tabular}{c c c c c|c c c c}
 $ P_{i_{1}}$ & $ P_{i_{2}}$ & $ P_{i_{3}}$ & $ P_{i_{4}}$ & $ P_{i_{5}}$  & $ \succ_{s_{1}}$ & $ \succ_{s_{2}}$ & $ \succ_{s_{3}}$ & $ \succ_{s_{4}}$ \\\hline
 $ s_{1}$ & $ s_{1}$ & $ s_{2}$ & $ s_{3}$ & $s_{3} $ & $ i_{3}$ & $ i_{2}$ & $ i_{1}$ & $ i_{5}$\\
 $ s_{2}$ & $ s_{2}$ & $ s_{1}$ & $ s_{1}$ & $s_{4} $ & $ i_{1}$ & $ i_{4}$ & $ i_{5}$ & $ \vdots$ \\
 $ s_{3}$ & $ s_{3}$ & $ s_{3}$ & $ s_{2}$ & $\vdots $ & $ \vdots$ & $ \vdots$ & $ \vdots$ &
 \end{tabular}
 \label{tab:my_label}
\end{table}
\end{center}
  Let us compare the mechanisms $ GS^{2}$ and $ GS^{1}$. We have
  
 \begin{equation*}
 GS^{2}(P,\succ,q)= \begin{pmatrix} i_{1} & i_{2} & i_{3} & i_{4} & i_{5} \\ \emptyset & s_{2} & s_{1} & \emptyset & s_{3}
 \end{pmatrix}
 \end{equation*}
where student $ i_{1}$ is the unique blocking student for the matching under $ (P,\succ,q)$. Indeed, $i_1$ is unmatched, finds $s_3$ acceptable and has a higher priority at $ s_{3}$ than $ i_{5}$. Let us shorten the reported list only for student $ i_{2}$. Then,
\begin{equation*}
 GS^{2}(P^{1}_{i_{2}},P_{-i_{2}},\succ,q)= \begin{pmatrix} i_{1} & i_{2} & i_{3} & i_{4} & i_{5} \\ s_{1} & \emptyset & s_{2} & \emptyset & s_{3}
 \end{pmatrix}.
 \end{equation*}
 As a result of this replacement, there are three types of students, given their status in the previous matching. First, student $ i_{2}$ --- who was matched --- became a blocking student. Second, student $ i_{1}$ --- who was a blocking student --- is not a blocking student for the new matching. Finally, student $ i_{4}$ is a new blocking student.
 
 The intuition of this result is that by shortening the schools listed by student $ i_{2}$, she is worse off while the other students are weakly better off. First, she is a blocking student for the new matching. Second, student $ i_{1}$ is not a blocking student for the new matching, though she was a blocking student for the old matching. But a new blocking student appears so  there are two blocking students in total. 
\end{example}

This example turns out to be a general pattern. When students shorten the list, the set of blocking students changes, but the size of this set never decreases. By sequentially applying this argument to all students, we get the following result.\footnote{We are very grateful to a referee for suggesting a much simplified technique for proving this result.}

\begin{theorem}
Suppose that there are at least two schools and let $ \lvert S\lvert > k >\ell\geq 1$. The constrained Gale-Shapley mechanism $GS^{k}$ is more fair by counting than the constrained Gale-Shapley mechanism $GS^{\ell}$.\label{Th-GS-counting}
\end{theorem}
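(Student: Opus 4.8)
The plan is to reduce the comparison to a single elementary move and to argue that this move can only help. Write $B(\nu)$ for the set of blocking students of a matching $\nu$ at the true problem $(P,\succ,q)$. Since $P^{\ell}_{i}$ is obtained from $P^{k}_{i}$ by deleting acceptable schools from the bottom, I can pass from the reported profile $P^{k}$ to $P^{\ell}$ through a finite sequence of one-school, one-student deletions, every intermediate profile being a truncation of the true $P$. If each deletion weakly increases $|B(\cdot)|$ of the induced $GS$-outcome, telescoping gives $|B(GS^{\ell}(P,\succ,q))|\ge |B(GS^{k}(P,\succ,q))|$ at every problem, which is part (i). (Individual rationality of both outcomes under $P$ is immediate: constrained $GS$ is individually rational under the reported list, and reported lists are prefixes of the true ones.) The one structural fact I use repeatedly is that $GS(R,\succ,q)$ is stable with respect to any reported truncation $R$, and a truncation keeps the \emph{top} of each list; hence every school a matched student prefers to her assignment is itself listed and, by reported-stability, unavailable. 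Thus \textbf{every blocking student of a constrained $GS$-outcome is unmatched and blocks only through a school ranked below her reported list}.

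For the single deletion, fix a truncation $Q$ of $P$, let $i$ be the student whose list is shortened, $s^{\dagger}$ her deleted (reported-last) school, and set $\mu=GS(Q,\succ,q)$, $\mu'=GS(Q',\succ,q)$. I split on $i$'s status under $\mu$. In case (A), $i$ is matched above $s^{\dagger}$; she never proposes to $s^{\dagger}$, so $\mu'=\mu$. In case (C), $i$ is unmatched; then $\mu$ is itself stable under $Q'$ (individual rationality is trivial and no new blocking pair can appear, as the only change is that $i$ no longer desires $s^{\dagger}$), so by the lattice structure all students weakly improve under $\mu'$, and by the rural hospitals theorem applied inside $(Q',\succ,q)$ the unmatched set is unchanged. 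In case (B), $i$ is matched to $s^{\dagger}$; she becomes unmatched under $\mu'$ and the remaining reassignment is a single improvement (vacancy) chain started by the seat she vacates at $s^{\dagger}$.

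The heart of the argument, and the step I expect to be the main obstacle, is the bookkeeping of blockers along these reassignments, for which I will establish a \emph{priority-monotonicity} property: if a student $j''$ occupies a school $s$ under $\mu'$ that she did not hold under $\mu$, then $j''$ has strictly lower $s$-priority than any student who could block $\mu$ through $s$ --- otherwise $(j'',s)$ would be a listed blocking pair of the stable matching $\mu$. This gives a \emph{preservation} statement: a blocking student of $\mu$ that is still unmatched under $\mu'$ remains a blocking student of $\mu'$, since her coveted school is either still vacant or now held by someone of even lower priority. In case (C) the unmatched set is fixed, so all blockers are preserved and the count weakly rises. In case (B) a single freed seat cannot raise total occupancy, so at most one previously unmatched student becomes matched; if that student was a blocker of $\mu$, then the shortened student $i$ exactly compensates, because $s^{\dagger}$ is now vacant or held by the first chain student, who has strictly lower $s^{\dagger}$-priority than $i$ (again by stability of $\mu$), making $i$ a new blocker of $\mu'$. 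Hence the count never drops; Example~\ref{ex-GS2to1} is precisely this chain (the old blocker $i_{1}$ is lost while $i_{2}$ is gained). I will phrase every priority comparison through responsiveness so that the same bookkeeping covers $q_{s}>1$, where the chain propagates one freed seat at a time.

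For the strict requirement (ii) it suffices to produce, for the given $k>\ell$, one problem, and an economical instance avoids padding Example~\ref{ex-GS2to1}. Take a student $i^{\ast}$ with true list $s_{1},\dots,s_{\ell+1}$ and $\ell$ further students $j_{1},\dots,j_{\ell}$ where $j_{r}$ has top choice $s_{r}$ and $j_{r}\succ_{s_{r}}i^{\ast}$, adding dummy students who rank every school below their outside option to meet $|I|>|S|$. Since $k\ge\ell+1$ the constraint does not bind, so $GS^{k}$ matches $i^{\ast}$ to $s_{\ell+1}$ and the $j_{r}$ to their top choices; this matching is stable and has no blocking student. Under $GS^{\ell}$, however, $i^{\ast}$ is truncated to $s_{1},\dots,s_{\ell}$, is rejected by all of them, and remains unmatched while $s_{\ell+1}$ stays empty, so $i^{\ast}$ blocks. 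The hypothesis $|S|>k\ge\ell+1$ guarantees the $\ell+1$ named schools fit. Together with part (i), this shows $GS^{k}$ is more fair by counting than $GS^{\ell}$.
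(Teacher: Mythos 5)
Your proposal is correct and follows essentially the same route as the paper's proof: establish that every blocking student of a constrained $GS$ outcome is unmatched, change the profile by elementary truncation steps, show via the weak-improvement and priority-transitivity bookkeeping that each step loses at most one old blocker (the at most one unmatched student who becomes matched) while the truncated student becomes a new blocker whenever she loses her seat, and finish with an explicit instance for strictness. The only differences are cosmetic --- you delete one school at a time where the paper truncates one student's whole list at a time (your cases (A)/(B)/(C) match the paper's Case II with $\nu=\mu$, Case II with $i$ unmatched under $\nu$, and Case I, respectively), and your strictness example replaces the paper's common-ranking construction.
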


Next, we show that this result does not extend to other mechanisms.

\begin{example}[Constrained Boston and constrained Gale-Shapley] \label{example3}

Let $n\geq 7$, $I=\{i_{1},...,i_{n}\}$ and $S=\{s_{1},\hdots, s_{5}\}$. Let $(P,\succ,q) $ be a problem where each school has one seat and the remaining components are specified as follows.

\begin{center}

\begin{tabular}{llllllll|l}
$P_{i_{1}}$& $P_{i_{2}}$& $P_{i_{3}}$& $P_{i_{4}}$& $P_{i_{5}}$  & $\hdots$& $P_{i_{n-1}}$   & $P_{i_{n}}$& $\succ_{s,\, s\in S}$  \\ \hline
$s_1$& $s_2$& $s_3$& $s_1$& $s_1$   & $s_1$   & $s_1$   & $s_4$& $i_1$\\
$\vdots$ & $\vdots$ & $\vdots$ & $s_4$& $s_2$   & $s_2$   & $s_2$   & $s_{5}$  & $i_2$\\
 &  &  & $s_{5}$  & $s_3$   & $s_3$   & $s_3$   & $\vdots$ & $i_3$\\
 &  &  & $\vdots$ & $s_{5}$ & $s_{5}$ & $s_{5}$ &  & $i_4$\\
 &  &  &  & $\emptyset$ & $\emptyset$ & $\emptyset$ &  & $i_5$\\
 &  &  &  & & & &  & $\vdots$ \\
 &  &  &  & & & &  & $i_n$   
\end{tabular}
\end{center}

The outcomes of $ \beta^{3}$ and $ GS^{3}$ for this problem are specified as follows:
\begin{equation*}
\beta^{3}(P,\succ,q)=\begin{pmatrix} i_{1} & i_{2} & i_{3} & i_{4} & i_{5} &\hdots & i_{n-1}& i_{n}\\ s_{1} & s_{2} & s_{3} & s_{5} & \emptyset& \hdots & \emptyset & s_{4}
\end{pmatrix}
\end{equation*} 
and 
\begin{equation*}
GS^{3}(P,\succ,q)=\begin{pmatrix} i_{1} & i_{2} & i_{3} & i_{4} & i_{5} &\hdots & i_{n-1}& i_{n}\\ s_{1} & s_{2} & s_{3} & s_{4} & \emptyset& \hdots & \emptyset & s_{5}
\end{pmatrix}.
\end{equation*}


Let us compare the number of blocking students for the two matchings. On one hand, student $i_4$ is the only blocking student for $ \beta^{3}(P,\succ,q)$. Indeed, the pair $ (i_{4},s_{4})$ blocks $ \beta^{3}(P,\succ,q)$ under $ (P,\succ,q)$. On the other hand, students $ i_{5},\hdots,i_{n-1}$ are all blocking students of $ GS^{3}(P,\succ,q)$ because they are unmatched, each of them prefers school $ s_{5} $ to being unmatched, and has higher priority than $ i_{n}$ under $ \succ_{s_{5}}$. Since $ n\geq 7$, there are at least two blocking students of $GS^{3}(P,\succ,q)$. Therefore, there are more blocking students of $ GS^{3}(P,\succ,q)$ than $ \beta^{3}(P,\succ,q)$. By Theorem \ref{thm1}, there is a problem where $ GS^{3}$ is stable but not $ \beta^{3}$.
\end{example}


\begin{remark*}[Chinese parallel]\label{example4} Consider the Chinese mechanisms $ Ch^{(1)}=\beta$ and $ Ch^{(3)}$ and note that for the problem $ (P,\succ,q)$ specified in Example \ref{example3}, $ Ch^{(1)}(P,\succ,q)=\beta^{3}(P,\succ,q)$ and $ Ch^{(3)}(P,\succ,q)=GS^{3}(P,\succ,q)$. According to the conclusion in Example \ref{example3}, there are more blocking students for $ Ch^{(3)}(P,\succ,q)$ than $ Ch^{(1)}(P,\succ,q)$. According to \cite{chenandkesten2017a}, there is a problem where $ Ch^{(3)}$ produces a stable outcome but $ Ch^{(1)}$ does not.

\end{remark*}



\begin{proposition} There is $k>1$, $e'>e$, and a problem such that, (i) the constrained Gale Shapley mechanism $GS^k$ is not more fair by counting than the constrained Boston mechanism $\beta^k$ and (ii) the Chinese parallel mechanism $Ch^{(e')}$ is not more fair by counting than the Chinese parallel mechanism $Ch^{(e)}$.
\end{proposition}
 
\begin{remark*}
Two other notions, comparing mechanisms by the inclusion of blocking pairs and blocking students, have also been studied by \cite{dougan2021minimally}. However, these criteria are stronger than fairness by counting (if the set of blocking pairs or blocking students shrinks, then the number of blocking students does as well) and will lead to negative results for our comparisons. To see this, consider Example \ref{example3}. In this example, $ (i_{5},s_{5})$ is a blocking pair for $ SD^{4}(P,\succ,q)$ but not for $ \beta^{4}(P,\succ,q)$. In addition, $ (i_{4},s_{4})$ is a blocking pair for $ \beta^{4}(P,\succ,q)$ but not for $ SD^{4}(P,\succ,q)$. 

For the comparison between different constrained Gale-Shapley, consider Example \ref{ex-GS2to1} where $ (i_{1},s_{3})$ is a blocking pair for $ GS^{2}$ but not $ GS^{1}$. In addition, $ (i_{2},s_{2})$ is a blocking pair for $ GS^{1}$ but not $ GS^{2}$.
\end{remark*}

\section{Stability and manipulability} 

In this section, we will elucidate the relation between blocking students and manipulating students, i.e., those who may benefit from misrepresenting their preferences to the mechanisms.  

\begin{definition} Let $ \varphi$ be a mechanism. (i) Student $ i$ is a \textbf{manipulating student} of $ \varphi$ at $ (P,\succ,q)$ if there is a preference relation $ P'_{i}$ such that 
\begin{equation*}
\varphi_{i}(P'_{i},P_{-i},\succ,q) \mathrel{P_{i}} \varphi_{i}(P,\succ,q).
\end{equation*}
(ii) Mechanism $ \varphi$ is \textbf{not manipulable} at $ (P,\succ,q)$ if there is no manipulating student of $ \varphi$ at $ (P,\succ,q)$.
\end{definition}
It turns out that there is a relationship between blocking students and manipulating students for the constrained Boston mechanism and the constrained Gale-Shapley mechanism. Interestingly, these relations between the two mechanisms are reversed.

\begin{theorem} Let $ k>1$. For any problem, (i) every blocking student of the constrained Boston mechanism $ \beta^{k}$ is a manipulating student, (ii) every manipulating student of the constrained Gale-Shapley mechanism $ GS^{k}$ is a blocking student, and (iii) every manipulating student of the constrained serial dictatorship mechanism $SD^k$ is a blocking student and vice versa.\label{theorem5}
\end{theorem} 

Note that there are instances where the reverse of parts (i) and (ii) of the theorem are not true. For part (ii) see Example \ref{example5}. For part (i) consider Example \ref{example3} and suppose that there, student $i_n$ prefers school $s_1$ first, $s_4$ next, and $s_5$ last. The preferences of the remaining students are unchanged. Then under $\beta^3$ student $i_n$ is matched to school $s_5$ and is not a blocking student. However, by ranking school $s_4$ first as in the example, she is matched to it. That is she is manipulating student of $\beta^3$. 

These results have important implications for the relation between manipulability and stability. To see this, suppose that there is no manipulating student for the constrained Boston mechanism $\beta^{k} $. Then, by part (i) of this theorem, there is no blocking student of $ \beta^{k}(P,\succ,q)$. Since $ \beta^{k}$ is individually rational, then $ \beta^{k}(P,\succ,q)$ is stable. Suppose now that there is no blocking student for $ GS^{k}(P,\succ,q)$. Since $ GS^{k}$ is individually rational, this means that $ GS^{k}(P,\succ,q)$ is stable. Then, there is no manipulating student of $ GS^{k}$. We summarize these results in the following corollary and in Figures 1 and 2.



\begin{corollary}
(i) If the constrained Boston mechanism $\beta^{k}$ is not manipulable, then it is stable. (ii) If the constrained Gale-Shapley mechanism $ GS^{k}$ is stable, it is not manipulable. \label{cor-B-GS-stab-manip}

\end{corollary}


\begin{figure}
\centering
\begin{minipage}{.5\textwidth}
  \centering

\tikzset{every picture/.style={line width=0.75pt}} 

\begin{tikzpicture}[x=0.75pt,y=0.75pt,yscale=-1,xscale=1]

\draw   (51,120) .. controls (51,115.58) and (54.58,112) .. (59,112) -- (191,112) .. controls (195.42,112) and (199,115.58) .. (199,120) -- (199,144) .. controls (199,148.42) and (195.42,152) .. (191,152) -- (59,152) .. controls (54.58,152) and (51,148.42) .. (51,144) -- cycle ;
\draw   (40,97) .. controls (40,88.72) and (46.72,82) .. (55,82) -- (194,82) .. controls (202.28,82) and (209,88.72) .. (209,97) -- (209,142) .. controls (209,150.28) and (202.28,157) .. (194,157) -- (55,157) .. controls (46.72,157) and (40,150.28) .. (40,142) -- cycle ;
\draw   (20.5,48.2) .. controls (20.5,31.52) and (34.02,18) .. (50.7,18) -- (203.8,18) .. controls (220.48,18) and (234,31.52) .. (234,48.2) -- (234,138.8) .. controls (234,155.48) and (220.48,169) .. (203.8,169) -- (50.7,169) .. controls (34.02,169) and (20.5,155.48) .. (20.5,138.8) -- cycle ;
\draw   (31,73.42) .. controls (31,61.04) and (41.04,51) .. (53.42,51) -- (197.58,51) .. controls (209.96,51) and (220,61.04) .. (220,73.42) -- (220,140.68) .. controls (220,153.06) and (209.96,163.1) .. (197.58,163.1) -- (53.42,163.1) .. controls (41.04,163.1) and (31,153.06) .. (31,140.68) -- cycle ;

\draw (60,124) node [anchor=north west][inner sep=0.75pt]   [align=left] {$\displaystyle \beta ^{k}$ not manipulable};
\draw (89,87) node [anchor=north west][inner sep=0.75pt]   [align=left] {$\displaystyle \beta ^{k}$ stable};
\draw (84,58) node [anchor=north west][inner sep=0.75pt]   [align=left] {$\displaystyle GS^{k}$ stable};
\draw (55,27) node [anchor=north west][inner sep=0.75pt]   [align=left] {$\displaystyle GS^{k}$ not manipulable};

\end{tikzpicture}

\captionof{figure}{Set inclusion of problems for $ GS^{k}$ and $ \beta^{k}$.} \label{Fig-GS-BM}
\end{minipage}%
\begin{minipage}{.5\textwidth}
  \centering

\tikzset{every picture/.style={line width=0.75pt}} 

\begin{tikzpicture}[x=0.75pt,y=0.75pt,yscale=-1,xscale=1]

\draw   (41,208) .. controls (41,203.58) and (44.58,200) .. (49,200) -- (124,200) .. controls (128.42,200) and (132,203.58) .. (132,208) -- (132,232) .. controls (132,236.42) and (128.42,240) .. (124,240) -- (49,240) .. controls (44.58,240) and (41,236.42) .. (41,232) -- cycle ;
\draw   (33.9,202) .. controls (33.9,195.37) and (39.27,190) .. (45.9,190) -- (229,190) .. controls (235.63,190) and (241,195.37) .. (241,202) -- (241,238) .. controls (241,244.63) and (235.63,250) .. (229,250) -- (45.9,250) .. controls (39.27,250) and (33.9,244.63) .. (33.9,238) -- cycle ;
\draw   (25,153.4) .. controls (25,140.48) and (35.48,130) .. (48.4,130) -- (118.6,130) .. controls (131.52,130) and (142,140.48) .. (142,153.4) -- (142,235.7) .. controls (142,248.62) and (131.52,259.1) .. (118.6,259.1) -- (48.4,259.1) .. controls (35.48,259.1) and (25,248.62) .. (25,235.7) -- cycle ;
\draw   (17,149.42) .. controls (17,133.17) and (30.17,120) .. (46.42,120) -- (222.58,120) .. controls (238.83,120) and (252,133.17) .. (252,149.42) -- (252,237.68) .. controls (252,253.93) and (238.83,267.1) .. (222.58,267.1) -- (46.42,267.1) .. controls (30.17,267.1) and (17,253.93) .. (17,237.68) -- cycle ;

\draw (50,210) node [anchor=north west][inner sep=0.75pt]   [align=left] {$\displaystyle GS^{k}$ stable};
\draw (150,199) node [anchor=north west][inner sep=0.75pt]   [align=left] {$\displaystyle GS^{k}$ not\\manipulable};
\draw (40,149) node [anchor=north west][inner sep=0.75pt]   [align=left] {$\displaystyle GS^{k+1}$ stable};
\draw (150,139) node [anchor=north west][inner sep=0.75pt]   [align=left] {$\displaystyle GS^{k+1}$ not\\manipulable};

\end{tikzpicture}

  \captionof{figure}{Set inclusion of problems for $ GS^{k}$ and $ GS^{k+1}$.}
\end{minipage}
\end{figure}




The manipulation strategy under the constrained GS is to include an unlisted acceptable school in the list. But when the constrained GS is stable, all the seats of such a school are assigned to higher-priority students, and such manipulation does not help. This implies that constrained the serial dictatorship mechanism is non-manipulable and stable for the same set of problems. 

\begin{proposition} \label{Prop-SDk}
Let $ (P,\succ,q)$ be a problem and $ k>1$. (i) The constrained serial dictatorship mechanism $ SD^{k}$ is stable if and only if it is not manipulable. (ii) If the constrained First-Preference-First mechanism $ FPF^{k}$ is stable, then the constrained Gale-Shapley mechanism $ GS^{k}$ is not manipulable.
\end{proposition}

Part (ii) of the above theorem is a surprising interplay between the two concepts for the compared mechanisms. Note that the constrained First-Preference-First mechanism and the constrained Gale-Shapley mechanism are not comparable via manipulability \citep{bonkoungou2020} and via fairness by stability (Example \ref{ex-fpf}). However, if at some profile $(P,\succ,q)$, $FPF^k$ is stable (while $GS^k$ might not), then $GS^k$ is not manipulable at $(P,\succ,q)$.

In general, the constrained Gale-Shapley mechanism may be unstable while not manipulable. We illustrate this in the following example. 

\begin{example}
Let $I=\{i_{1},\hdots,i_{4}\}$ and $ S=\{s_{1},\hdots,s_{4}\}$. Let $ (P,\succ,q)$ be a problem where each school has one seat and the remaining components are specified as follows.
\begin{center}
\begin{tabular}{c c c c|c c c c}
 $ P_{i_{1}}$ & $ P_{i_{2}}$ & $ P_{i_{3}}$ & $ P_{i_{4}}$ & $ \succ_{s_{1}}$ & $ \succ_{s_{2}}$ & $ \succ_{s_{3}}$ & $ \succ_{s_{4}}$\\\hline
$ s_{1}$ & $ s_{1}$ & $ s_{2}$ & $ s_{3}$ & $ i_{1}$ & $ i_{4}$ & $ i_{3} $ & $ \vdots$ \\
$ \vdots$ & $s_{2} $ & $ s_{3}$ & $ s_{2}$ & $ \vdots$ & $ i_{3}$ & $ i_{2}$ & \\
$ $ & $s_{3} $ & $ \vdots$ & $ \vdots$ & $ $ & $ i_{2}$ & $ i_{4}$ & \\
$  $ & $\emptyset $ & $  $ & $ $ & $  $ & $ i_{1}$ & $ i_{1}$ & \\

\end{tabular} \label{example5}
\end{center}
Let us consider the constrained Gale-Shapley mechanism $ GS^{2}$. We have
 \begin{equation*}
 GS^{2}(P,\succ,q)= \begin{pmatrix} i_{1} & i_{2} & i_{3} & i_{4} \\ s_{1} & \emptyset & s_{2} & s_{3}
 \end{pmatrix}.
 \end{equation*}
 This matching is not stable at $ (P,\succ,q)$ because student $ i_{2}$ is unmatched, finds school $ s_{3}$ acceptable while student $ i_{4}$ is matched to it and $ i_{2}\mathrel{\succ_{s_{3}}}i_{4}$. We claim that $ GS^{2}$ is not manipulable at $ (P,\succ,q)$. Only student $ i_{2}$ could benefit from misrepresenting her preferences to the mechanism $GS^{2} $ because each of the other students is matched to her most preferred school. Let $ P_{i_{2}}^{s_{3}}$ be a preference relation where student $ i_{2}$ has ranked only school $ s_{3}$ acceptable. Then,
\begin{equation*}
 GS^{2}(P_{i_{2}}^{s_{3}}, P_{-i_{2}},\succ,q)= \begin{pmatrix} i_{1} & i_{2} & i_{3} & i_{4} \\ s_{1} & \emptyset & s_{3} & s_{2}
 \end{pmatrix},
 \end{equation*} 
that is, student $ i_{2}$ remains unmatched even by ranking school $ s_{3}$ first. (It is easy to verify that any other strategy also leaves $i_2$ unmatched.) Therefore, $ GS^{2}$ is not manipulable at $ (P,\succ,q)$. The intuition is that this ranking initiates a chain of rejections which returns to this student. Student $ i_{2}$ becomes a so-called ``interrupter'' when she ranks school $ s_{3}$ first \citep{kesten2010}.
\end{example}

The contrasting result between the constrained Boston mechanism and the constrained Gale-Shapley can be traced back to the immediate versus deferred acceptance features and the constraint. Let us first state an important property of the Boston mechanism at the origin of its manipulability.

\begin{lemma}\label{lemma1} Let $k> 1$ and $(P,\succ,q)$ a problem. Student $ i$ is a manipulating student of the constrained Boston mechanism $\beta^k$ at the problem $(P,\succ,q)$ if and only if there is a school $s$ such that $s\mathrel{P_i}\beta_i^k(P,\succ,q)$, and there are less than $q_s$ number of students who ranked school $s$ first and have higher priority than student $i$ at school $s$. 
\end{lemma}

This lemma characterizes the set of manipulating students of the Boston mechanism by those who missed schools that are not ranked first by \enquote{enough} students who have higher priority than $i$ at the school in question. Clearly, if student $i$ is a blocking student of $\beta^k(P,\succ,q)$, then there is a school $s$ such that $s\mathrel{P_i}\beta_i^k(P,\succ,q)$ and a student $ j$ who is matched to school $s$ under $\beta^k(P,\succ,q)$ and has lower priority than $i$ at $s$ (or school $s$ has an unassigned seat under $\beta^k(P;\succ,q)$). Then, there are less than $q_s$ number of students who ranked school $s$ first and have higher priority than $i$ at $s$. Otherwise, student $j$ would not have been matched to school $s$ under $\beta^k(P,\succ,q)$ or no seat would have been left unassigned. This is the reason why the set of manipulating students includes blocking students.

By the deferred acceptance feature, students who are matched are neither blocking students nor manipulating students of the constrained Gale-Shapley mechanism. Blocking students and manipulating students are all students for whom the constraint is binding. That is, they are unmatched and have more acceptable schools than the ranking constraint. Because of the deferred acceptance feature, the priorities matter when a student contemplates a manipulation. No priority is violated for schools ranked within the constraint. If no student's priority is violated in any school that she has missed, then she cannot obtain a seat at any of these schools by ranking it within the constraint. To understand this conclusion, note that for any unmatched student who replaces one acceptable school with a school where her priority is not violated, the original matching remains stable under the new problem. Since she is unmatched in a stable matching, she will be unmatched in any other stable matching of the new problem \citep{roth1986}. Then, all manipulating students are also blocking students.       



 \section{Conclusions}
 
In response to various concerns, many school districts around the world have recently reformed their admissions systems. The reforms are essentially two major changes. First, they replaced the immediate acceptance procedure ( Boston mechanism) with Gale-Shapley's student-proposing deferred acceptance procedure while maintaining ranking constraints. Second, some school districts kept using the Gale-Shapley mechanism but extended the number of schools that each student is allowed to report. Anecdotal evidence points to the vulnerability to manipulation and fairness as reasons for these reforms. We showed that the immediate acceptance procedure has weakly more manipulating students than blocking students and the reverse for the Gale-Shapley's. We demonstrated that extending ranking constraints in the Gale-Shapley mechanism led to fewer blocking students. 

The fact that constrained Gale Shapley has relatively more blocking students (than it has manipulating students) suggests that in theory fairness of this mechanism might be of stronger concern than manipulability. Simultaneously, relaxing the constraint in this mechanism is guaranteed to decrease the number of blocking students. Whether this theoretical coincidence had any practical relevance for the reforms under consideration is an open question and requires further empirical research.

\bibliographystyle{apalike}
\bibliography{biblio}

\newpage

\section*{Appendix: Proofs}

To simplify the exposition we divide the appendix into three subsections. In each subsection, we order the results in logical order. All mechanisms that we consider are individually rational. We only consider blocking pairs to check for (the violation of) stability. We first present a useful lemma.

\begin{lemma}[Rural hospital theorem, \citeauthor{roth1986}, \citeyear{roth1986}] Given a problem, let $ \nu$ and $ \mu$ be two stable matchings. Then, 

(i) the same set of students are matched under $ \nu$ and $ \mu$, and 

(ii) each school is matched to the same number of students under $ \nu$ and $ \mu$, and every school which has an empty seat at one stable matching is matched to the same set of students under all stable matchings. \label{rht}

\end{lemma}

 \subsection*{Appendix A: Proof of Lemma \ref{lemma1}, and Proposition \ref{thm1}, \ref{Prop-SDk}}
 
\hspace{0.5cm}
\begin{proof}[Proof of Lemma \ref{lemma1}]
  Suppose that student $i$ is a manipulating student of $\beta^k$ at $(P,\succ,q)$. Then there is $P'_i$ such that 
  \begin{equation}\label{equation1}
      \beta_i^k(P'_i,P_{-i},\succ,q)\mathrel{P_i}\beta_i^k(P,\succ,q).
  \end{equation}
  Since $\beta^k$ is individually rational, then $\beta_i^k(P,\succ,q)\mathrel{R_i}\emptyset$ and together with Equation \ref{equation1}, we have $\beta_i^k(P'_i,P_{-i},\succ,q)=s$ for some school $s\in S$. Then student $ i$ did not rank school $s$ first under $P_i$. Suppose that there are at least $q_s$ students who ranked school $s$ first and have higher priority than student $i$ under $(P,\succ,q)$. Then all seats of school $s$ would have been allocated in the first step of the immediate acceptance algorithm to some students who ranked school $s$ first and have higher priority than student $i$. Then,  $\beta_i^k(P'_i,P_{-i},\succ,q)\neq s$, which is a contradiction. Therefore, there are less than $q_s$ number of students who ranked school $s$ first and have higher priority than student $i$ at school $s$.

  Let $(P,\succ,q)$ be a problem, $ i$ a student, and $s$ a school, and suppose that $s\mathrel{P_i}\beta^k(P,\succ,q)$ and that there are less than $q_s$ students who ranked school $s$ first and have higher priority than student $i$ at school $s$. Thus $\beta^k(P_i^s,P_{-i},\succ,q)=s$ and that student $i$ is manipulating student of $\beta^k$ at $(P,\succ,q)$. 
\end{proof}
\begin{proof}[Proof of Proposition \ref{Prop-SDk}] We call on to two claims.
\begin{claim}\label{claim1}
Suppose that student $i $ is matched to school $ s$ under $ GS^{k}(P,\succ,q)$ and let $ P_{i}^{s}$ be a preference relation where she has ranked only school $ s$ as acceptable. Then student $i $ is matched to school $s$ under $ GS^{k}(P^{s}_{i},P_{-i},\succ,q)$.
\end{claim} 

By \cite{roth1982}, $ GS_{i}(P^{k},\succ,q)=s$ implies that $GS_{i}(P^{s}_{i},P^{k}_{-i},\succ,q) =s$. We know that $(P_i^s)^k=P_i^s$. Thus, $ GS_{i}^{k}(P^{s}_{i},P_{-i},\succ,q)=s$.

\begin{claim}[\citeauthor{pathak2013}, \citeyear{pathak2013}]\label{claim2}
Suppose that student $ i$ is a manipulating student of $ GS^{k}$ at $ (P,\succ,q)$. Then she is unmatched under $ GS^{k}(P,\succ,q)$.
\end{claim} 

\textit{Part (i)}: This part is a direct corollary of part (iii) of Theorem \ref{theorem5}. 

\textit{Part (ii)}: Suppose that $ \mu=FPF^{k}(P,\succ,q)$ is stable at $ (P,\succ,q)$. By Claim \ref{claim2} every matched student under $ GS^{k}(P,\succ,q)$ is not a manipulating student of $ GS^{k}$ at $ (P,\succ,q)$. It is enough to show that no unmatched student under $ GS^{k}(P,\succ,q)$ has a profitable misrepresentation. Because $ GS^{k}$ is individually rational, by Claim \ref{claim1}, we further need to restrict ourselves to manipulation by top-ranking schools. Since $ \mu$ is stable at $ (P,\succ,q)$, we claim that it is also stable at $ (P^{k},\succ,q)$. Since $ GS^{k}$ is individually rational, we need to check that there is no blocking pair. Suppose, to the contrary, that a pair $ (i,s)$ is a blocking pair for $ \mu$ under $ (P^{k},\succ,q)$. Then, $ s\mathrel{P_{i}^{k}} \mu(i)$ and either (i) school $ s$ has an empty seat under $ \mu$ or (ii) there is a student $ j$ such that $ \mu(j)=s$ and $ i\mathrel{\succ_{s}}j$. Note that $ s\mathrel{P^{k}_{i}}\mu(i)$ implies that $ s\mathrel{P_{i}} \mu(i)$. Therefore, $ (i,s)$ is also a blocking pair for $ \mu$ under $ (P,\succ,q)$, thus contradicting our assumption that $\mu$ is stable at $(P,\succ,q)$. Therefore $ \mu$ is stable at $ (P^{k},\succ,q)$. Since $ GS(P^{k},\succ,q)$ is the student-optimal stable matching under $ (P^{k},\succ,q)$,
\begin{equation}
  \text{for each student}\:\: i,\: GS_{i}(P^{k},\succ,q) \mathrel{R_{i}^{k}}\mu(i). \label{eq2}
\end{equation}
By Lemma \ref{rht} the same set of students are matched under $ \mu$ and $ GS(P^{k},\succ,q)$. Let $ i$ be a student and $ s$ a school and suppose that $ i$ is unmatched under $GS(P^{k},\succ,q) $ and that $ s \mathrel{P_{i}} GS_{i}(P^{k},\succ,q)$. Then, student $ i$ is also unmatched under $ \mu$. Thus, $ s\mathrel{P_{i}}\mu(i)=\emptyset$. Because $ \mu$ is stable at $ (P,\succ,q)$ every student in $\mu^{-1}(s) $ has higher priority than $ i$ under $ \succ_{s}$. Let $ P^{s}_{i}$ denote a preference relation where $ i$ has ranked only school $ s$ acceptable. Since $ \mu$ is stable at $ (P^{k},\succ,q)$ it is also stable at $ (P^{s}_{i},P^{k}_{-i},\succ,q)$. By Lemma \ref{rht}, the set of matched students is the same at all stable matchings. Thus, student $ i$ is also unmatched under $ GS(P^{s}_{i},P^{k}_{-i},\succ,q)$. By Claim \ref{claim1}, there is no preference relation $ P'_{i}$ such that $ GS^{k}_{i}(P'_{i},P_{-i})=s$. Thus, $ GS^{k}$ is not manipulable at $ (P,\succ,q)$.

\end{proof}

\begin{proof}[Proof of Proposition \ref{thm1}] \textit{Part (i)}: The Boston mechanism is a special case of the First-Preference-First mechanism when every school is a first-preference-first school. Suppose that $ \beta^{k}(P,\succ,q)$ is stable at $ (P,\succ,q)$. By equation \ref{eq2}, each student finds the outcome $ GS^{k}(P,\succ,q)$ at least as good as $ \beta^{k}(P,\succ,q)$ under $ P^{k}$. We also know that the Boston mechanism is Pareto efficient, that is, for each problem, there is no other matching that each student finds at least as good as its outcome \citep{abdul2003}. Therefore, the matching $ \beta^{k}(P,\succ,q)=\beta(P^{k},\succ,q)$ is Pareto efficient under $ P^{k}$. Thus, 
$ GS^{k}(P,\succ,q)=\beta^{k}(P,\succ,q)$ is stable at $ (P,\succ,q)$.

We construct a problem where $ GS^{k}$ is stable but not $ \beta^{k}$. Since there are at least two schools and more students than schools, let $ s_{1},s_{2}$ be two distinct schools and $ i_{1}, i_{2}$ and $i_{3}$ three students. Let $ (P,\succ,q)$ be a problem where each school has one seat and the remaining components are specified as follows.

\begin{center}
\begin{tabular}{c c|c}
$ P_{i\neq 3}$ & $ P_{3}$ & $ \succ_{s\in S}$ \\\hline
 $s_{1}$ & $s_{2}$ & $i_{1}$\\ $s_{2}$ & $ s_{1}$ & $ i_{2}$ \\ $\emptyset$ & $\emptyset$& $i_{3}$
 \\ $ $ & $ $ & $\vdots$
\end{tabular}
\end{center}
Since $ k\geq 2$, $ GS^{k}(P,\succ,q)=GS(P,\succ,q)$ is stable at $ (P,\succ,q)$. However, the matching
\begin{equation*}
\beta^{k}(P,\succ,q)=\begin{pmatrix} i_{1} & i_{3} & i\neq 1,3\\ s_{1} & s_{2} & \emptyset 
\end{pmatrix}
\end{equation*}
is not stable because the pair $ (i_{2},s_{2})$ blocks it under $ (P,\succ,q)$.

\textit{Part (ii)}: This part is a direct corollary of Theorem \ref{thm1}.
\end{proof}

$ $

\subsection*{Appendix B: Proof of Theorems \ref{Th-GS-counting}}

\hspace{0.9cm}

\begin{lemma} \label{lemme-base} Let $ N$ be a subset of students and $\mu=GS(P^{\ell}_{N},P^{k}_{-N},\succ,q)$. Any blocking student for $ \mu$ under $ (P,\succ,q)$ is unmatched. 
\end{lemma}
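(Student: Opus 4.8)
The plan is to exploit the fact that $\mu$ is, by construction, the student-optimal stable matching for the \emph{reported} profile $\tilde P = (P^{\Bar{\ell}}_{N}, P^{\Bar{k}}_{-N})$; in particular, $\mu$ has no blocking pair under $\tilde P$. The whole argument then reduces to showing that for a \emph{matched} student, any school she prefers to her assignment under the true preferences $P$ is also a school she prefers under the reported preferences $\tilde P$. This turns a putative blocking pair under $P$ into a blocking pair under $\tilde P$, contradicting the stability of $\mu$ under $\tilde P$.

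First I would fix a student $i$ who is matched under $\mu$, say $\mu(i)=s\neq\emptyset$, and argue by contradiction, assuming $i$ is a blocking student for $\mu$ under $(P,\succ,q)$. By definition there is a school $s'$ such that the pair $(i,s')$ blocks $\mu$ under the true preferences: $s'\mathrel{P_{i}}s$, and on the school side either some $j$ with $\mu(j)=s'$ satisfies $i\mathrel{\succ_{s'}}j$, or $s'$ has a free seat. The key point is that this school-side condition makes no reference to $i$'s preferences, so it is identical whether we test blocking under $P$ or under $\tilde P$.

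Next I would show that $s'$ lies in $i$'s reported list and is reported above $s$. Since the constrained Gale-Shapley algorithm never assigns a student a school outside her reported list, $s$ is among $i$'s reported acceptable schools; in particular $s\mathrel{P_{i}}\emptyset$, so $s'\mathrel{P_{i}}s\mathrel{P_{i}}\emptyset$ and $s'$ is acceptable under $P_{i}$ and ranked strictly above $s$. As truncation keeps the top acceptable schools in their original order, the presence of $s$ in the truncated list forces $s'$ (ranked higher) into it as well, with $s'\mathrel{\tilde P_{i}}s$. Combining this with the school-side condition from the previous step, $(i,s')$ blocks $\mu$ under $\tilde P$, contradicting the stability of $\mu=GS(\tilde P,\succ,q)$. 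Hence no matched student can be a blocking student, which is exactly the claim.

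I do not anticipate a serious obstacle. The only point requiring explicit care is the order-preserving property of truncation invoked in the third step, which is immediate from the definition of $P^{k}_{i}$ but should be stated since the two constraints $\ell_{i}$ and $k_{i}$ differ across students; the argument, however, uses only that each matched student is assigned a school inside her own truncated list, so the mixed profile causes no difficulty. Everything else follows from the definition of a blocking pair and the stability of the student-optimal matching for the reported profile.
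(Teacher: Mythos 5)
Your proposal is correct and follows essentially the same route as the paper: both argue by contradiction that a matched student's assigned school lies in her truncated list, so a true-preference blocking school, being ranked above it, also lies in the truncated list, turning the blocking pair into one under the reported profile and contradicting the stability of $GS$ there. The paper simply packages the per-student truncation length as $x=\ell_i$ or $k_i$ depending on membership in $N$, which is the same order-preservation observation you make explicitly.
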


\begin{proof}
We prove it by the contradiction. Suppose, to the contrary, that student $ i$ is a blocking student for $\mu$ under $ (P,\succ,q)$ such that $ \mu(i)=s$ for some school $ s$. Then, there is a school $ s'$ such that $ s' \mathrel{P_{i}}\mu(i)$ and either (i) $ \lvert\mu^{-1}(s')\lvert < q_{s'}$ or (ii) there is a student $ j$ such that $ \mu(j)=s'$ and $ i\mathrel{\succ_{s'}}j$. Since $ \mu(i)=s$, school $ s$ is one of the top $ x$ acceptable schools under $ P_{i} $ where $x=\ell$ if $i\in N$ and $x=k$ if $x\notin N$. Thus $ s' \mathrel{P^{x}_{i}} \mu(i)=s$ and $ (i,s')$ is a blocking pair of $ \mu$ under $ (P^{\ell}_{N},P^{k}_{-N},\succ,q)$, contradicting the stability of $\mu$ under $ (P^{\ell}_{N},P^{k}_{-N},\succ,q)$. 

\end{proof}

\begin{proof}[Proof of Theorem \ref{Th-GS-counting}] We call on to the sequential version of \cite{mcvitie1970} of the deferred acceptance algorithm. This is a version where students apply one at a time according to a predetermined order such that in each step the highest-ordered student among the ones whose applications have not yet been tentatively accepted applies. 

The idea of the proof is to consider close ranking constraints $k-1$ and $k$, where $k>1$, and replace students' preference relations in $P^{k-1}$ with the ones in $P^k$. Let $N\subsetneq I$ be a proper subset of $I$ and $i\notin N$. Suppose that starting from $P^{k-1}$ we have replaced all the preferences of students in $N$ by their preferences in $P^{k}$ and define $ \widehat{P}=(P_N^k,P^{k-1}_{-N})$. Note that $N$ may be empty. Next we consider student $i$: from $\widehat{P}$ we replace her preference relation $P^{k-1}_i$ by $P^k_i$. Let $ \nu=GS(P_i^{k-1},\widehat{P}_{-i},\succ,q)$ and $\mu=GS(P^k_i,\widehat{P}_{-i},\succ,q)$. 

If student $i$ has less than $k$ schools acceptable under $P_i$ or is matched under $\nu$, of course to one of her top $k-1$ most preferred schools, then $\mu=\nu$ and both matchings have the same number of blocking students. Without loss of generality suppose that student $i$ has at least $k$ schools acceptable under $P_i$ and $\nu(i)=\emptyset$. Note that student $i$ has applied to all her first $ k-1$ acceptable schools in the algorithm for $\nu$. Let $ s$ denote her $ k$'th acceptable school under $ P^k_{i}$. The algorithm for $ \mu$ is a continuation of the one for $ \nu$ by letting $ i$ apply to school $ s$ and completing the subsequent sequences of applications and rejections. Consider the following possible pointing sequences starting from the step at which student $ i$ applied to school $ s$: 

\begin{equation}\label{seq1}
    i \rightarrow s \rightarrow i_1 \rightarrow s_1 \rightarrow i_2 \rightarrow s_2 \rightarrow ... i_n \rightarrow s_n \rightarrow i_{n+1} \rightarrow \emptyset, 
\end{equation}

\begin{equation}\label{seq2}
    i \rightarrow s \rightarrow i_1 \rightarrow s_1 \rightarrow i_2 \rightarrow s_2 \rightarrow ... i_n \rightarrow s_n \rightarrow i_{n+1} \rightarrow s_{n+1} 
\end{equation} where the pointing $ i'\rightarrow s'$ means that student $ i'$ applies to school $s'$ and $ s'\rightarrow i'$ means that school $s'$ rejects the application of student $i'$. In the sequence in equation \ref{seq1}, the pointing $ i_{n+1}\rightarrow \emptyset$ means that student $i_{n+1}$ applied to all her acceptable schools and thus remained unmatched. In the sequence in equation \ref{seq2}, the pointing $ i_{n+1}\rightarrow s_{n+1}$ for which school $ s_{n+1}$ does not point to any student means that school $ s_{n+1}$ did not reject any student after $ i_{n+1}$'s application. Note that there might be cycles and some students could appear several times in each sequence. 

Let $ \alpha$ and $ \beta$ denote the number of blocking students of $ \mu$ and $ \nu$ respectively, among the students in $ I\setminus \{i,i_{1},\hdots,i_{n+1}\}$ and $x$ and $ y$ the number of blocking students of $\mu$ and $\nu$ respectively, among the students in $ \{i,i_{1},\hdots,i_{n+1}\}$.

\begin{claim*}
$ \beta \geq \alpha$, and [$ x=y $ or $ x=y-1$].
\end{claim*}
We first prove that $ x=y $ or $ x=y-1$. Note first that all students in $\{i,i_{1},\hdots,i_{n+1}\}\setminus \{i,i_{n+1}\}$ are matched under $\mu$ and $\nu$. By Lemma \ref{lemme-base} they are not blocking students of $\mu$ and $\nu$. Hence, the comparison of the number of blocking students of $\mu$ and $\nu$ among the students in $ \{i,i_1,\hdots,i_{n+1}\}$ concerns students $i$ and $i_{n+1}$. We consider two cases:

Case 1: $i=i_{n+1}$. 
If student $i$ is not a blocking student of $\nu$, then school $s$ does not have an empty seat under $\nu$ and for each student $j\in\nu^{-1}(s)$, $j\mathrel{\succ_s} i$. The sequence is the one in equation \ref{seq1} but in simple version $i\rightarrow s \rightarrow i \rightarrow \emptyset$. Thus $\mu=\nu$ and student $i$ is not a blocking student of $\mu$. Thus $x=y$.

Case 2: $i\neq i_{n+1}$. In this case $i_1 \neq i$. We claim that student $i$ is a blocking student of $\nu$ but not a blocking student of $\mu$. The reason is that school $s$ has accepted $i$'s application and rejected student $i_1$. Thus $i\mathrel{\succ_s}i_1$ and $\mu(i)=s$. Since $\nu(i)=\emptyset$ and $\nu(i_1)=s$, student $i$ is a blocking student of $\nu$. Since $\mu(i)=s$, by Lemma \ref{lemme-base}, she is not a blocking student of $\mu$. Note now that student $i_{n+1}$ is matched under $\nu$. Thus by Lemma \ref{lemme-base} she is not a blocking student of $\nu$. Therefore, $0=x=y-1$ if $i_{n+1}$ is not a blocking student of $\mu$ and $1=x=y$ if $i_{n+1}$ is a blocking student of $\mu$. Overall, $ x=y $ or $ x=y-1$.

We now prove that $ \beta \geq \alpha$. We prove the claim that there is no student in $I\setminus \{i,i_1,\hdots,i_{n+1}\}$ who is not a blocking student of $\nu$ but a blocking student of $\mu$. Let $j\notin I\setminus \{i,i_1,\hdots,i_{n+1}\}$ and note that $ \nu(j)=\mu(j)$. If student $j$ is matched under $\nu$, then she is not a blocking student of $\nu$ and $ \mu$. Suppose that $ \nu(j)=\emptyset$, and for a contradiction, that she is not a blocking student of $\nu$ but a blocking student of $\mu$ under $(P,\succ,q)$. There is a school $s' $ such that $s' \mathrel{P_j} \mu(j)$ and either (i) school $s'$ has an empty seat under $ \mu$ or (ii) there is a student $ j'$ such that $ \mu(j')=s'$ and $ j \mathrel{\succ_{s'}j'}$. Consider (i). As above, school $s'$ has an empty seat under $\nu$. Therefore, student $j$ is also a blocking student of $\nu$, contradicting our assumption. Consider (ii). Suppose that school $ s'$ did not reject any student in any step from the step at which student $ i$ applies to school $ s$ to the end. Then $ \mu(j')=\nu(j')=s'$ and student $j$ is also a blocking student of $\nu$. Suppose that school $ s'$ has rejected a student $ j''$ such that $\nu(j'')=s'$. Since $\mu(j')=s' $, then $ j'\mathrel{\succ_{s'}} j''$. Thus $ j\mathrel{\succ_{s'}} j''$ and since $ \nu(j'')=s'$, student $j$ is a blocking student of $ \nu$, contradicting again our assumption. Therefore there is no student who is simultaneously not a blocking student of $\nu$ but a blocking student of $\mu$ under $(P,\succ,q)$. Thus $ \alpha \geq \beta$. 

By this claim, $ \alpha+ y \geq \beta + x$. Therefore, $ \nu$ has a weakly larger number of blocking students than $ \mu$. Starting from $N=\emptyset$ and successively replacing students' preferences in any order we conclude that $GS(P^{k-1},\succ,q)$ has weakly more number of blocking students than $GS(P^{k},\succ,q)$.

Finally, we describe a problem where the outcome of $ GS^{\ell}$ has more blocking students than the outcome of $ GS^{k}$. Let $ (P,\succ,q)$ be a problem where each school has one seat, each student has $ k$ acceptable schools, and such that students have a common ranking of schools. Then, $ GS^{k}(P,\succ,q)=GS(P,\succ,q)$. Thus $ GS^{k}(P,\succ,q)$ is stable at $ (P,\succ,q)$. Let $ s$ be the school that students have ranked at the $ k$'th position starting from the top. Since there are more students than schools and $ k>\ell$, at least one student is not matched under $ GS^{\ell}(P,\succ,q)$ and no student is matched to school $ s$ even though every student prefers it to be unmatched. Therefore, there are more blocking students for $ GS^{\ell}(P,\succ,q)$ than $ GS^{k}(P,\succ,q)$ under $ (P,\succ,q)$.
\end{proof}


$ $

\subsection*{Appendix C: Proof of Theorem \ref{theorem5}}

\hspace{0.9cm}

\begin{proof}[Proof of Theorem \ref{theorem5}]
\textit{Part (i)}: Let $ i$ be a blocking student of $ \mu=\beta(P^{k},\succ,q)$. There is a school $ s$ such that the pair $ (i,s)$ blocks $ \mu$ under $ (P,\succ,q)$. Then, $ s\mathrel{P_{i}}\mu(i)$ and either (a) school $ s$ has an empty seat under $ \mu$ or (b) there is a student $ j$ such that $ \mu(j)=s$ and $ i\mathrel{\succ_{s}}j$. We claim that student $ i$ did not rank school $ s$ first under $ P_{i}$. Otherwise, school $s$ has rejected student $ i$ at the first step of the Boston algorithm under $ (P^{k},\succ,q)$. This is because $ k>1$ and the top-ranked schools are considered under $ \beta^{k}$. This contradicts the assumption that school $ s$ has an empty seat or has accepted student $ j$ with $ i\mathrel{\succ_{s}}j$. Let $ P^{s}_{i}$ be a preference relation where $ i$ has ranked school $ s$ first. Since $ s$ has an empty seat under $ \beta^{k}(P,\succ,q)$ or has accepted student $ j$ with $ i\mathrel{\succ_{s}}j$, there are less than $ q_{s}$ students who have ranked school $ s$ first under $ P^{k}$ and have a higher priority than $ i$ under $ \succ_{s}$. Therefore, $ \beta_{i}^k(P^{s}_{i},P_{-i},\succ,q)=s$. Since $ s\mathrel{P_{i}}\mu(i)$, $ i$ is a manipulating student of $ \beta^{k}$ at $ (P,\succ,q)$.

\textit{Part (ii)}: We prove this part by contradiction. Suppose that student $ i$ is a manipulating student of $ GS^{k}$ at $ (P,\succ,q)$ but is not a blocking student of $\mu=GS^{k}(P,\succ,q)$ under $ (P,\succ,q)$. By Claim \ref{claim2}, $ i$ is unmatched under $ GS^{k}(P,\succ,q)$. Let $ s$ be a school such that $ s\mathrel{P_{i}} \mu(i)$. Then, $ \lvert\mu^{-1}(s)\lvert=q_{s}$ and every student in $\mu^{-1}(s) $ has higher priority than $ i$ under $ \succ_{s}$. Let $ P^{s}_{i}$ be a preference relation where $ i$ has ranked only school $ s$ as an acceptable school. Since $ \mu$ is stable at $ (P^{k},\succ,q)$, it is also stable at $ (P^{s}_{i},P^{k}_{-i},\succ,q)$. This follows from the fact that $ \mu(i)=\emptyset$ and that every student in $ \mu^{-1}(s)$ has higher priority than $ i$ under $ \succ_{s}$. By Lemma \ref{rht}, the set of unmatched students is the same under $ \mu$ and $ GS(P^{s}_{i},P^{k}_{-i},\succ,q) $. Thus, $ i$ is also unmatched under $ GS^{k}(P^{s}_{i},P_{-i},\succ,q)$. By Claim \ref{claim1}, there is no misreport by which student $ i$ is matched to $ s$. Since $ s$ has been chosen arbitrarily, $ i$ is not a manipulating student of $ GS^{k}$ at $ (P,\succ,q)$, a conclusion that contradicts our assumption. Therefore student $i$ is a blocking student of $GS^{k}(P,\succ,q)$ under $(P,\succ,q)$.

\textit{Part (iii)}. By part (ii) every manipulating student of $SD^k$ is a blocking student of $SD^k(P,\succ,q)$. Let $i$ be a blocking student of $\mu=SD^k(P,\succ,q)$. Then at $i$'s turn, there is no seat left among her top $k$ acceptable schools. She is then left unmatched while she has ranked a school $s$ as acceptable and below the position $k$ which still has a seat available. Let $P^s_{i}$ be a preference relation where $i$ ranks $s$ first. Then $SD^k_i(P^s_i,P_{-i},\succ,q)=s$. Therefore, student $i$ is a manipulating student of $SD^k$ at $(P,\succ,q)$.
\end{proof}

\end{document}